\DeclarePairedDelimiter\abs{\lvert}{\rvert} 
\let\oldabs\abs
\def\abs{\@ifstar{\oldabs}{\oldabs*}}
\newtheorem{theorem}{Theorem}
\theoremstyle{definition}
{\itshape}{\normalfont}
\theoremstyle{remark}
\theoremstyle{arequirement}
\theoremstyle{crequirement}
\newtheorem{crequirement}{Requirement}
\theoremstyle{metric}
\DeclareSIUnit\knot{kn}
\newcommand{\eg}{e.g.,\xspace}
\newcommand{\ie}{i.e.,\xspace}
\newcommand{\commentout}[1]{}
\def\BibTeX{{\rm B\kern-.05em{\sc i\kern-.025em b}\kern-.08em
    T\kern-.1667em\lower.7ex\hbox{E}\kern-.125emX}}
\newcommand{\domain}{X}
\newcommand{\codomain}{Y}
\newcommand{\inputvect}{\mathbf{x}}
\newcommand{\outputvect}{\mathbf{y}}
\newcommand{\futureinput}{\mathbf{j}}
\newcommand{\detvect}{\mathbf{d}}
\newcommand{\vectsize}{n}
\newcommand{\hitprob}{p_{\mathrm{hit}}}
\newcommand{\missprob}{p_{\mathrm{miss}}}
\newcommand{\prob}[1]{\mathrm{Pr}(#1)}
\newcommand{\minhit}{x_{\mathrm{min}}}
\newcommand{\minmiss}{y_{\mathrm{min}}}
\newcommand{\ProximityLoss}{\mathtt{LossProxAlrt}}
\newcommand{\ProximityMalfn}{\mathtt{ProxAlertMalfn}}
\newcommand{\failToDetect}{\mathtt{SgnDetFlr}}
\newcommand{\definedAs}{\stackrel{\text{def}}{=}}
\newcommand{\qso}{q_{\mathrm{tr}}}
\begin{document}

\title{
Relating System Safety and Machine Learnt Model Performance\thanks{This work was authored by an employee of KBR, Inc., under Contract No.~80ARC020D0010 with the National Aeronautics and Space Administration. The United States Government retains and the publisher, by accepting the article for publication, acknowledges that the United States Government retains a non-exclusive, paid-up, irrevocable, worldwide license to reproduce, prepare derivative works, distribute copies to the public, and perform publicly and display publicly, or allow others to do so, for United States Government purposes. All other rights are reserved by the copyright owner.}}

\author{Ganesh J. Pai\\
	\normalsize{KBR / NASA Ames Research Center}\\
	\normalsize{Moffett Field, CA 94035, USA}\\
	\normalsize{ganesh.pai@nasa.gov}}

\date{}

\maketitle

\thispagestyle{firstpage}
\begin{abstract}

The prediction quality of machine learnt models and the functionality they ultimately enable (\eg object detection), is typically evaluated using a variety of quantitative metrics that are specified in the associated model performance requirements. When integrating such models into aeronautical applications, a top-down safety assessment process must influence both the model performance metrics selected, and their acceptable range of values. Often, however, the relationship of system safety objectives to model performance requirements and the associated metrics is unclear. 
Using an example of an aircraft emergency braking system containing a machine learnt component (MLC) responsible for object detection and alerting, this paper first describes a simple abstraction of the required MLC behavior. Then, based on that abstraction, an initial method is given to 
derive the minimum safety-related performance requirements, the associated metrics, and their targets for the both MLC and its underlying deep neural network, such that they meet the quantitative safety objectives obtained from the safety assessment process.
We give rationale as to why the proposed method should be considered valid, also clarifying the assumptions made, the constraints on applicability, and the implications for verification.

\end{abstract}


\section{Introduction}\label{s:intro}

Amongst the core outcomes of the safety assessment process for civil aircraft~\cite{arp4761} are \emph{quantitative safety objectives} (QSOs). They represent an acceptable upper limit on the average probability of events that result in adverse safety effects. As part of aircraft system development, QSOs are allocated across the system hierarchy, from aircraft functions to the implementing \emph{items}. In conventional systems not including machine learning (ML), the decomposition, allocation, refinement, and eventual verification of QSOs has only been applied to hardware items. 

QSOs are not considered for software and the programmable aspects of hardware, in part, because the prevailing assurance guidelines~\cite{do178}, \cite{do254} intentionally avoid concepts of quantitative reliability or failure probability. Instead the focus is on applying process rigor to identify and correct development errors. The goal is providing assurance to an adequate level of confidence that the implementations of software or hardware designs are correct. 
The extent of the necessary development process rigor, given in terms of \emph{development assurance levels} (DALs), is proportional to the severity of the undesired effects identified from the safety assessment process. 

Although QSOs and DALs each address a different type of concern, they are associated through the severity of the effects of function (or item) failures. In particular, functions (or items) whose failures lead to higher severity effects are assigned a proportionally higher DAL and lower QSO, than those causing lower severity effects. Moreover, safety verification expects to confirm that those functions or items have been developed to the assigned level of rigor, and also that, for the related hardware, the QSOs have been attained. 

When integrating machine learning (ML)-based functionality into aircraft systems, it is anticipated that in addition to the assignment of DALs, allocating QSOs to \emph{machine learnt components} (MLCs), and relating the corresponding targets to the associated performance requirements and metrics will be mandated.\footnote{Although aviation industry consensus-based guidelines for development and assurance of aeronautical systems integrating ML are still being crafted, some regulatory publications expected to inform aviation rulemaking have proposed to relate QSOs to MLC and MLM performance metrics~\cite{easa-cp-L12}.}
Performance metrics for MLCs can be seen in part as quantitative criteria giving a long-term characterization (\ie over the duration of their intended use) of their behavior relative to their requirements. An MLC that fails to meet its requirements may lead to functional failures and thereby to system-level safety effects. As such, relating the performance metrics of an MLC to higher-level safety objectives facilitates capturing how it contributes to both safety and the overall functional intent.

Currently available guidance for integrating ML into aeronautical systems, \eg~\cite{easa-cp-L12}, does not clarify how valid MLC performance requirements should follow from an allocated QSO. Nor does it clarify how safety-related metrics may be selected, which metrics may be invoked in those requirements, or what range of values may be admissible for those metrics. Although those questions have been previously identified and investigated in other safety-critical domains (see Section~\ref{s:related-work} for related work), so far as we are aware they have not yet been adequately answered.
To that end, we adapt the \emph{aircraft emergency braking system} (AEBS) from prior literature~\cite{aebs-dasc2024} as an illustrative example (described in Section~\ref{s:example}), to make the following contributions in this paper:
\begin{compactitem}
\item We describe an initial method to translate QSOs obtained from a safety assessment into the safety-related performance requirements and associated metrics for an MLC (Section~\ref{s:methodology}). 

\item We develop an abstraction of the required behavior (Section~\ref{s:surrogate-model}) that: 
\begin{inparaenum}[(i)]
	\item traces to and meets the allocated QSO, and 
	\item is suitable for determining safety-related performance metrics and parameters such as the required confirmation threshold for detecting an object of interest in an image sequence, a tolerable miss ratio for not detecting that object, and the per image probability of non detection---a metric directly linked to the generalization capability of the machine learnt model.
\end{inparaenum}
\end{compactitem}

We supply the rationale to substantiate why our method and the resulting MLC performance requirements should be considered to be valid in Section~\ref{s:discussion}. 
This section also discusses the additional considerations that result from the method, the constraints that apply, and the implications for verification.

\section{Conceptual Background}\label{s:background} 

This section introduces the concepts relevant for the rest of the paper. 

\subsection{Quantitative Safety Objectives (QSOs)}

\newcommand{\pfh}{\mathrm{pfh}}

As mentioned earlier in Section~\ref{s:intro}, a QSO is an acceptable upper limit on the average per flight probability of an adverse event, usually normalized by exposure (itself expressed as a duration or a count). For systems and equipment installed on aircraft, QSOs may also be viewed as targets for reliability or, equivalently, probability of (the effects of) so-called \emph{failure conditions}: aircraft-level conditions that can directly or indirectly affect an aircraft and its occupants, including the crew, caused by one or more system failures, in combination with operational or environmental conditions encountered during various flight phases. 

Note that failure conditions are synonymous with \emph{hazards}, as used in other safety-critical domains. Also, failures include both \emph{loss of function} and \emph{malfunction}, whose respective causes encompass but are not limited to one or more component failures and their combinations, common causes, unintended or undesired emergent system interactions, and development errors (including errors in requirements) and their respective effects.

QSOs are defined and selected such that they are inversely proportional to the severity of the credible worst-case safety effects identified in a functional hazard assessment (FHA). The acceptable range of values for a QSO relative to effect severity is codified in civil aviation regulatory guidance documents. For example, a failure condition of $\mathsf{MINOR}$ severity, characterized as resulting in ``a slight reduction of functional capabilities or safety margins of the airplane, physical discomfort for passengers, or a slight increase in workload for the crew'', is associated with an allowable quantitative probability, \ie a QSO, of $10^{-3}$ per flight hour ($\pfh$)~\cite{ac23-1309}. 

The decomposition and allocation of a QSO in aircraft system development follows the preliminary system safety assessment (PSSA) process~\cite{arp4761}. That process contributes to a systematic evaluation of a system architecture to determine how failures of the architectural components lead to the failure conditions identified in the higher level FHA. The PSSA can employ different analysis techniques, such as fault tree analysis (FTA), and Markov models.

A quantitative, combinatorial FTA serves to validate the failure probability (or reliability) budgets established for architectural components, by confirming that they lead to a probability of an identified failure condition that is no worse than the associated QSO. Such a validation is a bottom-up assessment. A top-down analysis may also be performed to decompose and allocate the QSO to the architectural components by leveraging the fault tree logic and various heuristics. 

\commentout{
Although mentioned earlier (in Section~\ref{s:intro}), it is worth reiterating that such quantitative analyses are primarily undertaken only for the hardware elements of the system architecture. For software and the programmable aspects of hardware, the PSSA rather produces a \emph{qualitative} safety objective, representing the required level of development process rigor. 
%
%
As will be described (and justified) subsequently (Sections~\ref{s:surrogate-model} and~\ref{s:rationale}), the allocation of a QSO to the MLC hardware will be used as a reference to derive MLC and ML model performance requirements, even though the latter are to be evaluated for what is arguably an executable software specification. 
}

\subsection{Machine Learnt Models and Components and Their Characteristics}\label{ss:mlc-mlm-characteristics}

\subsubsection{Machine Learnt Model (MLM)}\label{sss:mlm}
An MLM is a mathematical formula or mapping rule, $f: \domain \to \codomain$, constructed by applying learning algorithms to (training) data, which comprises examples of the (patterns of) behavior to be learnt~\cite{ml-nasa-tr-2024}. Here, $\domain$ is the input space (or domain, or feature space), and $\codomain$ is the output space (or codomain, or space of responses). A deep neural network (DNN) is one possible such MLM. A description of $\domain$ as captured in the MLM requirements is known as an \emph{operational design domain} (ODD)~\cite{ml-nasa-tr-2024}. 

\subsubsection{Machine Learnt Component (MLC)}\label{sss:mlc}

In this paper, an MLC groups hardware and software implementations of one or more MLMs and, when appropriate, the supporting functionality (such as pre- and post-processing) necessary for their execution. An MLC is treated as a single entity allocated a DAL and a QSO from a system standpoint. 

\subsubsection{Deterministic Behavior}\label{sss:deterministic}

A trained MLM that does not continue to learn in use is \emph{static}. 
That is, once $f$ has been constructed, it does not change given some future input $\futureinput \in \domain$. 
As such, $f$ is \emph{deterministic} in the sense that, given a specific input (vector) $\inputvect \in \domain$ for which the model produces a response (vector) $\outputvect \in \codomain$, any future input $\futureinput \in \domain$ that is identical to the input $\inputvect$ will always produce the same response $\outputvect$.

\subsubsection{Systematic Behavior and Correctness}\label{sss:systematic}

A suitable MLM is one that \emph{generalizes} from the training data inputs to unseen inputs from $\domain$, producing the required responses from $\codomain$.

The response $\outputvect$ for the input $\inputvect$ is \emph{correct} when it is the required response, otherwise it is \emph{incorrect}. More generally, because $f$ is deterministic, the responses of a static MLM to its inputs are \emph{systematic} in being correct or incorrect. That is, the input $\inputvect$ supplied at any future time point will always produce the same correct or incorrect response $\outputvect$. 
Moreover, if $g:\domain \to \codomain$ is the \emph{true} (but usually unknown) function relating the input and output spaces, then $f$ is correct when for all $x \in \domain, f(x) = g(x)$. That is, $f$ produces the correct response for any input from $\domain$, and is said to generalize perfectly. 

However, uncertainties in various aspects of the ML process, \eg epistemic uncertainty due to insufficient knowledge about the nature of $g$ and, therefore, a suitable form for $f$, as well as aleatoric uncertainty when sampling from $\domain$, together sampling limitations, can often result in an $f$ that may not always produce the correct responses for some subset of previously unseen inputs from $\domain$. 
Such imperfect generalization can be characterized in terms of the \emph{generalization error}, a (performance) metric of how MLM responses in use differ or deviate from the required responses for previously unseen inputs. 
The generalization error cannot be exactly calculated, but instead, theoretically, it can be probabilistically bounded to give a \emph{probably approximately correct} MLM~\cite{murphy-pml}, especially in the context of supervised learning (also see Section~\ref{ss:generalization}).

\subsubsection{Failure Probability and Insufficient Performance}
\label{sss:io-prob}

\newcommand{\ind}{\mathbf{1}}
\newcommand{\inputdist}{\mathrm{Pr}_{\domain}}
\newcommand{\inputspacedist}{\inputdist(\inputvect)}
\newcommand{\outputdist}{\mathrm{Pr}_{\codomain}}
\newcommand{\outputspacedist}{\outputdist(\outputvect)}
\newcommand{\conditionaldist}{\mathrm{Pr}_{\domain|\codomain}}

The inputs from $\domain$ may be governed in general by some (possibly unknown) generating process. The individual inputs can then be described in terms of (empirical estimates of) their limiting relative frequencies and, in turn, as a probability function $\inputspacedist$. In fact, a careful characterization of $\inputspacedist$ is a key requirement when defining the ODD~\cite{kape-safecomp-2023}. 
Given the preceding discussion (Sections~\ref{sss:deterministic} and~\ref{sss:systematic}), 
and assuming that $f$ is not a constant function (\ie $f$ produces the same response for any input), when the inputs occur according to $\inputspacedist$, the relative frequencies of the responses can also be established. In other words, the responses can be described through a probability function, $\outputspacedist$. 

Now, for a discrete input $\inputvect \in \domain$ occurring with a probability $\mathrm{Pr}_{\domain}(\inputvect)$, let $\outputvect \in \codomain$ be the correct response, and let $\ind_{f}(\inputvect)$ be an \emph{indicator function} defined such that $\ind_{f}(\inputvect) \equiv 1$ when $f$ returns an incorrect response (\ie $f(\inputvect) \neq \outputvect$), and is $0$ otherwise. Then, treating all incorrect responses as \emph{failures}, we can define a \emph{probability of failure} of an MLM as in \eqref{eq:prob-fail}, \ie the limiting relative frequency of incorrect responses for an infinite sequence of random discrete\footnotemark{} inputs $\inputvect$ that occur according to the input space probability mass function $\inputspacedist$:  
\begin{IEEEeqnarray}{c}\label{eq:prob-fail}
\prob{f(\inputvect) \neq \outputvect} \definedAs \sum_{\inputvect \in \domain} \ind_{f}(\inputvect)\inputdist(\inputvect)
\end{IEEEeqnarray}

Later (Section~\ref{ss:generalization}), we describe how such long-term failure behavior characterizes \emph{insufficient generalization performance} of an MLM.
\footnotetext{For continuous values, an integration and a probability density function, respectively, replace the summation and the probability mass function. A similar formulation for Eq.~\eqref{eq:prob-fail} is also referenced as \emph{true error} in~\cite{mitchell1997}, \emph{probability of misclassification per random input} for classifiers in~\cite{Zhao2023}, and is the complement of the \emph{probability of a successful prediction} in~\cite{Scheerer2024}.}

\subsection{Performance Metrics and Requirements}\label{ss:perf-reqs-metrix}

Once an MLM has been constructed, quantitative metrics are typically used to evaluate its prediction quality, \ie how well its responses to inputs not previously seen during its training and development, correspond to the functional intent and the required responses. 
Examples of some commonly used metrics include: (for classification problems) \emph{precision}, \emph{recall}, and \emph{F1 score}, 
as well as (for regression problems) \emph{mean absolute error}, and \emph{mean squared error}. 

When the type of the response of an MLM and its containing MLC are the same, then the same set of metrics may be used for each.
For instance, an MLC classifying its inputs using an ensemble of classifiers can be evaluated using the same classification performance metrics as those used for evaluating the individual MLMs in the ensemble. However, the specific values of those metrics for each MLM in the ensemble may differ from the values of the same metrics when applied to the containing MLC. 

For conventional aircraft systems not integrating ML, performance requirements describe specific attributes of functions or systems, such as the type of performance, accuracy, range, fidelity, resolution, and timing behavior~\cite{arp4754}. 

In addition to the above, MLC and MLM performance requirements express the respective desired long-term behaviors, for which a probabilistic formulation may often be appropriate.
More generally, they invoke the associated performance metrics and specify their admissible values.

\emph{Safety-related} performance requirements for an MLC and MLM are those traceable to QSOs, or to higher-level safety requirements, whose violation causes or contributes to a failure condition of the containing (system or aircraft-level) function. 
Non safety-related performance requirements are, equivalently, those whose violation does not cause or contribute to failure conditions. This paper focuses primarily on the former.

\section{Illustrative Example}\label{s:example}

To explain the derivation of safety-related MLC and MLM performance requirements and metrics from an allocated QSO, we consider an illustrative example as shown in Fig.~\ref{f:aebs-bd}---an aircraft emergency braking system (AEBS) adapted from the prior literature~\cite{aebs-dasc2024} as follows: unlike in Fig.~\ref{f:aebs-bd}, the architecture in~\cite{aebs-dasc2024} does not include pre-processing. Also, it treats the post-processing as a part of the emergency braking controller (EBC) functionality, and (implicitly) equates the machine learnt sign detector (MLSD) with the MLC.

\begin{figure}[h]
	\centering
	\includegraphics[width=0.9\textwidth]{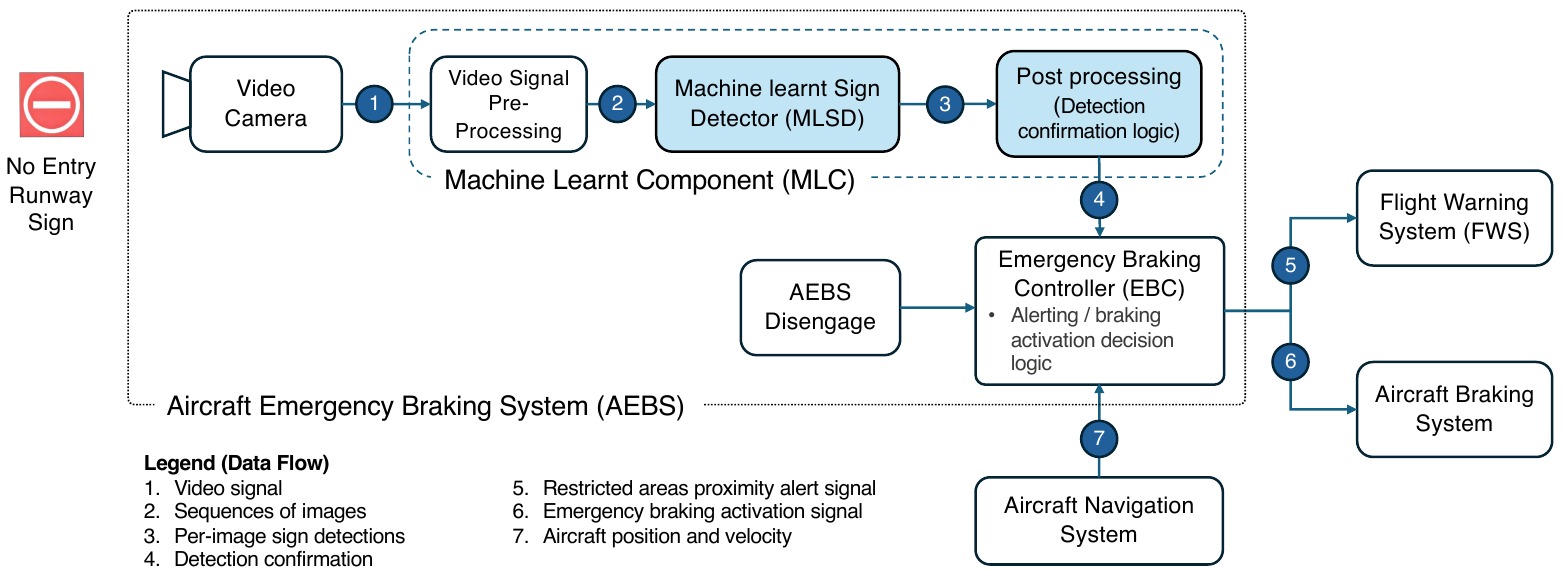}
	\caption{Aircraft Emergency Braking System (AEBS) and its machine learnt component (MLC), adapted from~\cite{aebs-dasc2024}.} 
	\label{f:aebs-bd}
\end{figure}

This section summarizes only those aspects of the AEBS and its safety assessment that are a necessary background for this paper. For more and other details on the AEBS, we refer the reader to~\cite{aebs-dasc2024}.

\subsection{System Description}\label{ss:system-description}

\subsubsection{Functions}\label{ss:function}
The main AEBS function of relevance for this paper is 
\emph{generating an alert} to warn the flight crew (\eg via cockpit annunciation) of the proximity of the aircraft to restricted areas of an airport, which are marked by \emph{No Entry} runway (NER) signs. 
A sub-function of this alerting function allocated to the MLC is 
\emph{NER sign detection and classification}. Note that the \emph{emergency braking function} of the AEBS is not in scope for this paper and, as such, affects the safety assessment described later (see Sections~\ref{ss:safety-assessment}, and~\ref{ss:new-pssa})

\subsubsection{Machine Learnt Component}\label{sss:mlc-aebs}

As shown in Fig.~\ref{f:aebs-bd}, The MLC comprises a machine learnt sign detector (MLSD) and its related pre- and post-processing functionality. The MLSD is an implementation of an MLM on target hardware. Here, the MLM is a deep convolutional neural network trained to detect and classify NER signs using supervised, offline learning. 

The MLSD inputs (data flow 2) are sequences of images produced after pre-processing the video signal (data flow 1) from an aircraft mounted, forward facing video camera. 
Video signal pre-processing represents the functionality necessary for the runtime consistency of the types of inputs that the MLSD receives in use, and those on which it is trained offline.
The MLSD responses are a sequence of per image detections or non-detections (data flow 3), corresponding to the input image sequence. Those responses undergo post-processing, a key aspect of which is to confirm or reject confirmation of the detection of an NER sign in a \emph{detection vector}, \ie a fixed size sub-sequence created from the sequence of MLSD responses. 

The confirmation of NER sign detection from the post-processing (data flow 4) is then used by the emergency braking controller (EBC) to send a \emph{restricted areas proximity} (RAP) alert (data flow 5) to the flight warning system (FWS), or an emergency braking activation signal (data flow 6) to the aircraft braking system. As previously mentioned, we do not consider the latter for the rest of this paper.

It is worth noting that the post-processing as shown in Fig.~\ref{f:aebs-bd} is closely coupled to the NER sign detection sub-function. Hence, it is an integral and inseparable element of the MLC. However, in~\cite{aebs-dasc2024} this post-processing is treated as an element of the EBC and referred to as \emph{tracking}, with its failure considered to be the \emph{failure to track NER signs} (also see Section~\ref{ss:safety-assessment}). Although, it is in fact \emph{detection confirmation}, the term we will use henceforth, rather than true tracking.
The detection confirmation logic uses a \emph{confirmation threshold} (\ie a required number of true per image detections in the detection vector) to confirm that an NER sign has indeed been detected when one exists. This confirmation does not require a specific order of detections in the detection vector. 

The detection vector size ($\vectsize = 12$) is determined by:  
\begin{inparaenum}[(i)]
	\item the \emph{detection window period} (the time in which the MLSD must detect an NER sign and raise an alert, so that the aircraft can then be safely decelerated and halted either by the pilot or by automation), and 
	\item the \emph{detection frequency} (the rate at which the MLSD produces per image detections). 
\end{inparaenum}
Those parameters, in turn, depend on various characteristics of the AEBS, the crew, and the aircraft, which include: 
\begin{inparaenum}[(a)]
	\item the maximum taxiing speed (\qty{30}{\knot} $\approx$ \qty{15.43}{\meter\second^{-1}}), 
	\item the maximum deceleration (\qty{6}{\meter\second^{-2}}), 
	\item the pilot reaction time (\qty{3}{\second}), and 
	\item the maximum distance from which a detection is required (\qty{85}{\meter}). 
\end{inparaenum}
We do not repeat the derivation of those parameters, previously detailed in~\cite{aebs-dasc2024}, as it is not required or relevant for this paper. We also note that although the AEBS shown in Fig.~\ref{f:aebs-bd} modifies and adapts the original from~\cite{aebs-dasc2024}, it does not alter those parameter values or their derivation.

\subsection{Safety Assessment}\label{ss:safety-assessment}

The safety effects for which the AEBS is a preventative safety barrier are:
\begin{inparaenum}[(i)]
	\item an inadvertent incursion into a prohibited area, such as a taxiway meant to be used in a given direction; and
	\item an excursion from an aircraft movement surface onto one not meant for aircraft, such as an intersecting roadway. 
\end{inparaenum}

As mentioned earlier (Section~\ref{ss:system-description}), in this paper the scope of the intended use of the AEBS is mainly pilot assistance, even though it includes the capacity for automatic intervention when there is a RAP violation. 
Thus, the primary safety barrier is still piloting procedures in the runway environment, \ie the pilot visually acquires NER signs whilst taxiing, and decelerates upon approaching a restricted area. As such, the AEBS serves as an \emph{additional} protection layer, \eg by providing a RAP alert that will warn the crew if they are distracted. This consideration influences the criticality assigned to the failure conditions of the AEBS function. 

An FHA and PSSA for the AEBS have been given previously in~\cite{aebs-dasc2024}, which we summarize next, to contextualize the rest of the paper.
Specifically, the AEBS functional failure conditions of interest are $\ProximityLoss$: \emph{Loss of RAP alert (crew unaware)}, and $\ProximityMalfn$: \emph{Malfunction of RAP alert}, each of which are assigned a $\mathsf{MINOR}$ severity and a QSO of $10^{-3}~\pfh$, as per the FHA in~\cite{aebs-dasc2024}. 
Additionally, the PSSA invokes a quantitative FTA~\cite{aebs-dasc2024} to relate $\ProximityLoss$ to so-called \emph{ML performance failures}, in particular a \emph{failure of the EBC to track NER signs due to MLC false negatives} allocating to it a QSO of \num{4e-4} per flight. That target is then halved to account for the assumptions of encountering an average of 2 NER signs per flight, and an average flight duration of 4\si{\hour}, resulting in an effective QSO of \num{2e-4} per flight for the MLC.

\newcommand{\flsalrt}{\mathtt{FPProxAlrt}}
\newcommand{\missalrt}{\mathtt{FNProxAlrt}}

\section{Methodology}\label{s:methodology}

\subsection{Assumptions}\label{ss:assumptions}

To simplify the illustration of the proposed method, we assume the following: 
\begin{compactenum}[(1)] 
	\item the camera in the AEBS is functional, operating normally, calibrated, stably mounted, and faithfully captures and transmits the environmental scene as a sequence of images; 
	\item the environmental scene does not contain other signs or objects that could be mistaken as an NER sign; 
	\item there are no transmission errors in the data flow from the video camera through the pre-processing, the MLSD, the post-processing, and the EBC,  to the FWS, so that the data transmitted are uncorrupted and have the correct temporal order as captured by the video camera; and 
	\item pre-processing does not introduce undesired information into the image stream, \eg adversarial transformations.
\end{compactenum}

\begin{figure}[h]
	\centering
	\includegraphics[width=0.575\columnwidth]{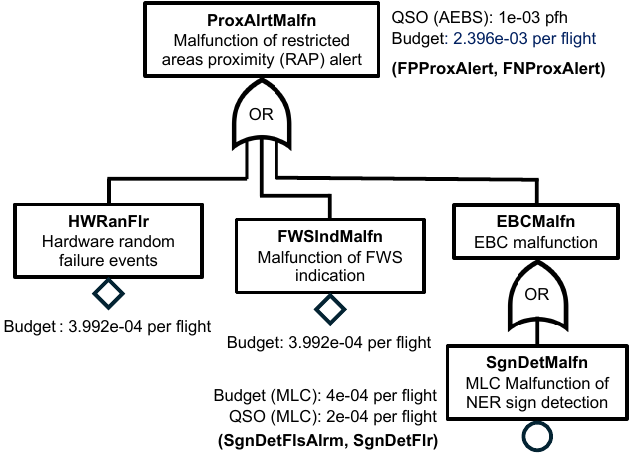}
	\caption{Fault tree relating the malfunction of the RAP alert failure condition of the AEBS, to MLC malfunction.}
	\label{f:aebs-fta}
\end{figure}

\subsection{Revised PSSA and QSO Allocation}\label{ss:new-pssa}

The adaptation of the AEBS (see Section~\ref{sss:mlc}, and Fig.~\ref{f:aebs-bd}) from the original architecture in~\cite{aebs-dasc2024} induces  modifications to the previously mentioned safety assessment. Additionally we identify some corrections to the FTA in~\cite{aebs-dasc2024}. Fig.~\ref{f:aebs-fta} shows a revised fault tree for the failure condition $\ProximityMalfn$, reflecting the following combination of functional failures.

First, $\ProximityMalfn$ can be specialized as two mutually exclusive states: $\flsalrt$: \emph{Inadvertent RAP alert (alert issued when not required)}, and $\missalrt$: \emph{Missing RAP alert (alert not issued when required)}. 
In~\cite{aebs-dasc2024}, only the former has been identified in the FHA as a failure condition, whereas the latter has been incorrectly considered as equivalent to $\ProximityLoss$ in the FTA. Indeed, $\missalrt$ can occur when the AEBS and FWS are both operational and available.

\newcommand{\mlcmalfn}{\mathtt{SgnDetMalfn}}
\newcommand{\flsalrm}{\mathtt{SgnDetFlsAlrm}}

\newcommand{\hwranflr}{\mathtt{HWRanFlr}}
\newcommand{\fwsindmalf}{\mathtt{FWSIndMalfn}}
\newcommand{\ebcmalf}{\mathtt{EBCMalfn}}

Next, from a functional flow standpoint $\ProximityMalfn$ results from a combination of: 
\begin{compactenum}[(i)]
	\item $\fwsindmalf$: \emph{Malfunction of the FWS indication}, 
	\item errors in the FWS alerting logic, or 
	\item $\ebcmalf$: \emph{EBC malfunction}.
\end{compactenum}
$\ebcmalf$ can itself result from errors in the alerting activation decision logic in the EBC, or from the input to the EBC (data flow 4 in Fig.~\ref{f:aebs-bd}), reflected as $\mlcmalfn$: \emph{MLC malfunction of NER sign detection}. 
That, in turn, manifests as one of two mutually exclusive states\footnotemark{}, \ie $\flsalrm$: \emph{False confirmation of an NER sign} (a false positive), and $\failToDetect$: \emph{Failure to confirm detection of the NER sign} (a false negative). 
\footnotetext{In general, sign detection malfunctions are \emph{false positives} or \emph{false classifications} where, for example, either one type of runway sign is misclassified as a different type of sign, or as not a sign (\ie a \emph{false negative}). However, in this example, since the MLM is a binary classifier trained specifically for NER sign detection, the MLC produces a Boolean confirmation response.}

Per the recommended practice~\cite{arp4761}, the fault tree in Fig.~\ref{f:aebs-fta} excludes events corresponding to errors in conventional software, \ie logic errors in the EBC and the FWS. Additionally, it includes $\hwranflr$: \emph{Hardware random failure events}, to aggregate and abstract \emph{other} hardware failures that can also lead to the top event. 
We also include $\mlcmalfn$ in the fault tree, noting that this basic event represents \emph{insufficient MLC performance} rather than a hardware random failure. This is a departure from the conventional practice, justified by the discussion in Section~\ref{sss:io-prob}.

For convenience and comparison to the prior literature, we retain the failure probability budgets and QSOs from~\cite{aebs-dasc2024} for both the top event, $\ProximityMalfn$, and the basic event of the malfunction of the MLC, $\mlcmalfn$, as shown in Fig.~\ref{f:aebs-fta}. It can be easily confirmed that the probability budgets as shown are correct with respect to the fault tree logic.

Thus, as indicated in Section~\ref{ss:safety-assessment}, the effective QSO for $\mlcmalfn$ is \num{2e-4} per taxi operation. Also note that changes to these budgets do not affect the discussion that follows on the proposed method for deriving performance requirements; however the concrete requirements will indeed change.

\subsection{Scope of MLC Behavior}\label{ss:scope}

Again, for convenience, and ready comparison to~\cite{aebs-dasc2024}, in what follows, we mainly focus on the taxiing scenarios where an NER sign is actually present. 
As such, the failure condition $\ProximityMalfn$ effectively presents as the state $\missalrt$ (\ie RAP alert not issued when required), and, likewise, the basic event $\mlcmalfn$ is the state $\failToDetect$ (\ie a failure of the MLC to confirm detection of the NER sign). 
Together with the earlier assumptions (Section~\ref{ss:assumptions}), the scope of MLC behavior and the subsequent analysis for developing safety-related performance requirements for this paper is constrained as follows:
\begin{compactitem}
	\item When the operating environment contains an NER sign, then the responses of the MLSD (see Fig.~\ref{f:aebs-bd}) to an input image containing that NER sign are either: 
	\begin{inparaenum}[(i)]
		\item a \emph{hit}, \ie a correct (true positive) detection of the NER sign (including  correct bounding boxes and class labels), or 
		\item a \emph{miss}, \ie all MLSD responses that are not a hit. Effectively, a miss is only a false negative, since false positives or false classifications cannot be produced in scenarios where an NER sign is actually present in the environment. 
	\end{inparaenum}

	\item Depending on the number of hits and misses determining the confirmation threshold in the detection vector, the detection confirmation logic either confirms an NER sign detection, or it does not confirm an NER sign detection. 

	\item Thus, in all taxiing scenarios where an NER sign is present in the operating environment, when the post-processing does not confirm a sign detection, it represents the occurrence of an MLC malfunction in the state $\failToDetect$, with an effective QSO of \num{2e-4} per flight (taxi operation).

\end{compactitem}

\subsection{From Safety Objectives to Safety-related Performance}\label{ss:intuition}

The QSO allocated to $\mlcmalfn$ is the starting point for deriving the MLC performance requirements and metrics in the AEBS. As clarified above, that event is based on the per image detections received from the MLSD, in the detection vector, during post-processing. 

Specifically, according to the detection confirmation logic, a non-detection occurs when the detection vector contains fewer per image hits than the minimum permissible number of hits required to confirm detection. In other words, when an NER sign is present, to avoid $\failToDetect$: 
\begin{compactenum}[(i)]	
	\item the detection vector must contain at least as many hits as the confirmation threshold; and 
	\item the confirmation threshold should be defined such that the probability of not confirming a detection must be lower than the QSO allocated to $\failToDetect$. 
\end{compactenum}
Note that a related concept of \emph{rejection threshold} can be considered that results in not confirming that an NER sign has been detected. 
Thus, the confirmation (or rejection) threshold is a parameter relevant for safety-related performance.

Additionally, when an NER sign is present in the operating environment and the detection vector contains more per image misses than hits, it suggests that the MLSD has a larger than required \emph{per image probability of non-detection} (equivalently, the \emph{per image miss probability}) leading to the rejection threshold being satisfied. Thus, the per image miss probability  is a safety-related model performance metric, and to avoid $\failToDetect$ it should be defined such that the rejection threshold is not met (or, equivalently, the confirmation threshold is met).

\section{Safety-related Performance Requirements}\label{s:surrogate-model}

We now formalize the preceding intuition as an abstraction of the required behavior (Fig.~\ref{f:aebs-abstraction}), from which we formulate safety-related performance metrics and requirements for the MLC and its underlying MLM. The focus is on specifying requirements rather than verifying that the requirements have been met.

\begin{figure}[h]
	\centering
	\includegraphics[width=0.9\textwidth]{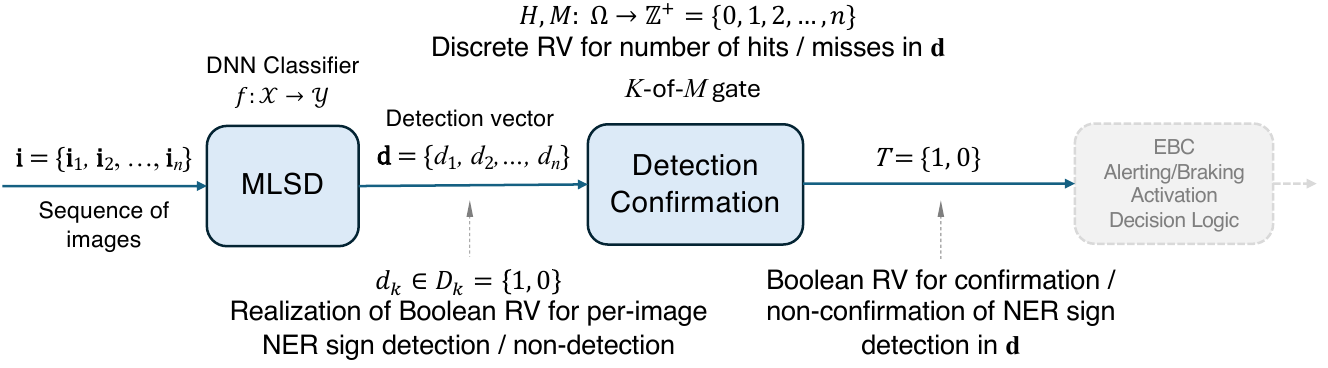}
	\caption{Abstraction to describe the required behavior for NER sign detection using the MLC.}
	\label{f:aebs-abstraction}
\end{figure}

\subsection{Abstraction of Required Behavior}\label{ss:abstraction}

Let $T = \{0,1\}$ be a Boolean random variable (RV) for the event of an MLC response, the output of the post-processing detection confirmation logic. 
Those responses are either a confirmation of detection of an NER sign, \ie the event $(T=1)$, or the malfunction $\mlcmalfn$, \ie the event $(T=0)$.  
As clarified in Sections~\ref{ss:new-pssa} and~\ref{ss:scope}, $\mlcmalfn$ occurs as the state $\failToDetect$, \ie a failure to confirm detection of the NER sign. Hence, 
\begin{IEEEeqnarray*}{c}
	\failToDetect \definedAs (T=0)
\end{IEEEeqnarray*}
Let the QSO allocated to $\mlcmalfn$ be $\qso$. Thus, a concrete safety-related MLC performance requirement for NER sign detection, based on the allocation from the PSSA process (specifically, the FTA in Fig.~\ref{f:aebs-fta}), is:

\begin{crequirement}[MLC Safety Performance]\label{req:fail-to-detect}
The average probability of non-detection of an NER sign per taxi operation shall be less than $\qso$, \ie $\prob{T=0} < \qso \leftarrow \num{2e-04}$
\end{crequirement}

We can specify an analogous requirement on the MLC \emph{functional performance} as: 

\begin{crequirement}[MLC Functional Performance]\label{req:detect-success}
The probability of detecting an NER sign shall be at least $(1-\qso)$, \ie $\prob{T=1} \geq (1-\qso) \leftarrow \num{0.9998}$
\end{crequirement}

From Fig.~\ref{f:aebs-abstraction}, the detection vector, $\detvect = \{d_{1}, d_{2}, \ldots, d_{n}\}$, of size $\vectsize$, is a finite sequence of responses produced by the MLSD, $f$, to a sequence of input images $\{\mathbf{i}_{j}\}_{j=1}^{n}$. 
Here, $d_{j} \in \{1, 0\}$ is the realization of $D_{j}$, a Boolean RV representing the event of the $j^{\textrm{th}}$ response of $f$ to the $j^{\textrm{th}}$ input image $\mathbf{i}_{j}$. If $(D_{j} = 1)$, $(D_{j} = 0)$ represent a hit and a miss, respectively, then whenever there is a hit in $\detvect$, $d_{j}=1$, otherwise $d_{j} = 0$. 

Let the confirmation and rejection thresholds be $\minhit$ and $\minmiss$, respectively. Also let $H, M$ be the discrete RVs for the number of hits and misses, respectively, whose realizations are $h, m \in \{0, 1, 2, \ldots, n\}$.
As clarified earlier (Section~\ref{sss:mlc} and~\ref{ss:intuition}), the post-processing confirms that an NER sign has been detected when $h \geq \minhit$. 
Moreover, since hits do not need to occur in a specific order in $\detvect$ for a detection confirmation, the corresponding logic is a $K$-of-$M$ \emph{gate}, where $K = \minhit$ and $M = n$. 

We can now readily confirm that $h$ is the sum of the individual detections in $\detvect$, and formalize the detection confirmation logic as: $\forall \detvect, (h = \sum_{j=1}^{n}d_{j}) \geq \minhit \Rightarrow (T=1)$. We will concretize this as a requirement next, in Section~\ref{ss:concrete-reqs}.

Since $\detvect$ contains a combination of hits and misses, we have $\vectsize = h + m$, and when $h = \minhit$, then $m = (\vectsize - \minhit)$ represents the maximum permissible per image misses in $\detvect$ that still results in a detection confirmation. Hence, an additional miss will result in a failure to confirm detection, so that $\minmiss = (\vectsize - \minhit) + 1$. 

Now, assume that a hit or miss response of $f$ is the result of a Bernoulli trial, and that each $D_{j} \in \detvect$ is independent and identically distributed (IID)\footnote{Section~\ref{ss:threats-to-validity} justifies these assumptions and discusses their implications.}. 
Then, let the \emph{per image hit probability}, $\prob{D_{j} = 1} = \hitprob$, so that the \emph{per image miss probability}, $\prob{D_{j} = 0} = \missprob = 1 - \hitprob$. 

We have that $\detvect$ is a realization of a \emph{Bernoulli process}, \ie a sequence formed by the result of $\vectsize$ Bernoulli trials in which there are $h$ events such that $(D_{j} = 1)$ and $m$ events such that $(D_{j} = 0)$. 
Since the sum of the RVs of a Bernoulli process is another RV that follows a binomial distribution, 
$H \sim \mathtt{Binomial}(n, \hitprob)$, and the probability of at least $h$ hits is 
\begin{IEEEeqnarray}{c}\label{eq:atleast-h}
  \prob{H \geq h} = \sum_{i=h}^{n}\binom{n}{i} \hitprob^i (1-\hitprob)^{n-i}
\end{IEEEeqnarray} 
Hence, the probability of confirming an NER sign detection is
\begin{IEEEeqnarray}{c}\label{eq:det-confirmation-prob}
	\prob{T=1} =	\prob{H \geq \minhit} = \sum_{i=\minhit}^{n} \binom{n}{i} \hitprob^{i} (1-\hitprob)^{n-1}
\end{IEEEeqnarray}

Then, the probability of failure to confirm detection of the NER sign, $\prob{T=0}$, is $1 - \prob{T=1}$, which we formulate in terms of $M$, $n$, $\minmiss$, and $\missprob$. That is, 
\begin{align}\label{eq:det-fail-prob}
		\prob{T=0} = \prob{M \geq \minmiss} = \sum_{i=\minmiss}^{n} 
					 		\binom{n}{i}\missprob^{i} (1-\missprob)^{n-1}
\end{align}

\subsection{Concrete Performance Requirements}
\label{ss:concrete-reqs}

To establish concrete requirements for $\hitprob$, $\missprob$, $\minhit$ and $\minmiss$, we solve 
either of \eqref{eq:det-confirmation-prob} and~\eqref{eq:det-fail-prob} such that reqs.~\ref{req:fail-to-detect} or~\ref{req:detect-success}, respectively, are satisfied. 
%
%
Fig.~\ref{f:non-detect-vs-miss} 
shows a graphical solution, varying $\prob{T=0}$ on a logarithmic scale, for the rejection thresholds $12 \geq \minmiss > m \in [4, 11]$, and a range of $\missprob = [0, 0.5]$. 

\begin{figure}[h]
	\centering
		\includegraphics[width=\textwidth]{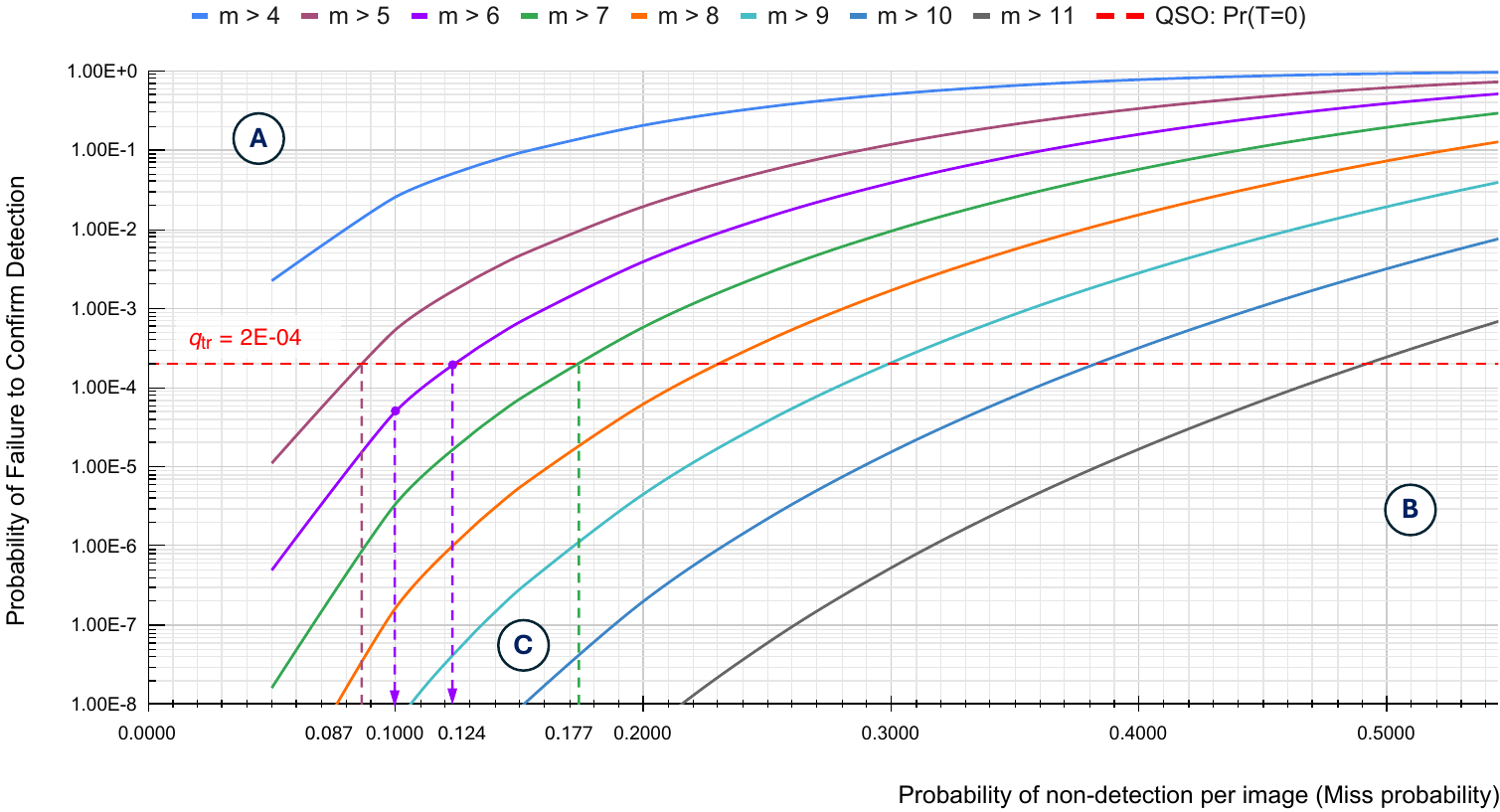}
		\caption{Varying $\prob{T=0}$ on the $y$-axis, with $\missprob$ on the $x$-axis, for different values of $m$, determined from~\eqref{eq:det-confirmation-prob}.}
		\label{f:non-detect-vs-miss}
\end{figure}

The dotted horizontal line in Fig.~\ref{f:non-detect-vs-miss} is the QSO for failing to confirm NER sign detection. As shown, the QSO is not met in region $\mathsf{A}$, but is satisfied in region in $\mathsf{B}$ for rejection thresholds $\minmiss = 12 \geq m > 5$. 
The region $\mathsf{C}$, between the two vertical dotted lines, is a sub-region of $\mathsf{B}$, giving a candidate range for 
$\missprob \approx [0.087, 0.177]$ and 
$\minmiss = [6,8]$, respectively.
Then, together with the previous discussion (Section~\ref{ss:abstraction}), we  obtain a range for $\hitprob \approx [0.823, 0.913]$ and $\minhit = [5, 7]$.


\newcommand{\missratio}{m_{\mathrm{t}}}
\newcommand{\meanm}{\mu_{m}}
\newcommand{\stddevm}{\sigma_{m}}


From Fig.~\ref{f:non-detect-vs-miss}, we have $\missprob' \approx 0.124$, where the QSO is exactly $\qso$ for $\minhit = 6$. 
We can now select, say, $\minmiss > 6$, and $\missprob = 0.1$, so that $\minhit = 6$ and $\hitprob = 0.9$. 
Then we can specify additional concrete performance requirements for the MLC and its elements, namely the MLSD, and its post-processing. 

First, the concrete requirement based on the formalization of the detection confirmation logic is:

\begin{crequirement}[MLC Detection Confirmation]\label{req:detection-concrete}
The MLC post-processing shall confirm an NER sign detection whenever there are at least $6$ detections in any detection vector, \ie $\forall \detvect, \minhit \geq 6 \Rightarrow (T=1)$
\end{crequirement}

We may equivalently specify the dual of Req.~\ref{req:detection-concrete}  specifying the rejection of sign detection confirmation based on the rejection threshold as:

\begin{crequirement}[MLC Reject Detection Confirmation]\label{req:detection-rejection-concrete}
The MLC post-processing shall reject confirmation of an NER sign detection whenever there are at least $7$ non-detections in any detection vector, \ie $\forall \detvect, \minmiss \geq 7 \Rightarrow (T=0)$
\end{crequirement}

Then, similar to Req.~\ref{req:fail-to-detect}, 
the required MLSD safety performance in terms of the respective miss 
probability is:  

\begin{crequirement}[MLSD Safety Performance]\label{req:mlsd-safe-perf}
The MLSD shall have a per image probability of non-detection of an NER sign of at most $0.1$, \ie $\forall d_{j} \in \detvect; \prob{D_{j} = 0} = \missprob \leq 0.1$
\end{crequirement}

As earlier, we can give an analogous requirement for MLSD functional performance 
in terms of $\hitprob$ 
as: 
\begin{crequirement}[MLSD Functional Performance]\label{req:mlsd-func-perf}
The MLSD shall have a per image probability of detection of an NER sign of at least $0.9$, \ie $\forall d_{j} \in \detvect; \prob{D_{j} = 1} = \hitprob \geq 0.9$
\end{crequirement}

We additionally specify MLSD safety performance in terms of a \emph{tolerable miss ratio} metric, $\missratio$, \ie the allowable proportion of missed detections per detection vector, which we compute as: $\missratio = (\meanm + \stddevm)/n  \approx 0.187$, where $\meanm = n\missprob$ is the mean, and $\stddevm^{2} = n\missprob(1-\missprob)$ is the variance, respectively, of $M\sim \mathtt{Binomial}(n, \missprob)$. 
Hence: 
%
%
\begin{crequirement}[MLC Safety Performance -- Miss Ratio]\label{req:missrate}
The tolerable miss ratio for the MLSD shall not exceed $0.187$, \ie $\missratio \leq 0.187$
\end{crequirement}

\subsection{Generalization Performance Requirements}\label{ss:generalization}

\newcommand{\rpop}{R_{p}}
\newcommand{\remp}{R_{\mathrm{e}}}
\newcommand{\loss}{\boldsymbol\ell}
\newcommand{\expect}{\mathbb{E}}
\newcommand{\data}{\mathcal{D}}
\newcommand{\trainingdata}{\mathcal{D}_{\mathrm{train}}}
\newcommand{\testdata}{\mathcal{D}_{\mathrm{test}}}
\newcommand{\fnr}{\mathtt{FNR}}
\newcommand{\tpr}{\mathtt{TPR}}
\newcommand{\jointdist}{\mathrm{Pr}_{\domain,\codomain}}

Recalling the discussion in Section~\ref{sss:io-prob}, when the MLM failure probability exceeds its requirement as derived from the safety objectives, its generalization performance is insufficient. In other words, the minimum required generalization performance requirement is related to the maximum tolerable MLM failure probability, which we now characterize in terms of the model \emph{generalization error} and \emph{generalization gap}.

\subsubsection{Generalization Error}\label{sss:gen-error}

The generalization error $\rpop(f)$ for an MLM $f$, also known as the \emph{population risk}, is defined as the expected value of a suitable \emph{loss function}, $\loss\left(f(\inputvect),\outputvect\right)$, evaluated over the joint distribution $\mathrm{Pr}_{\domain,\codomain}(x,y)$ of the input and output spaces for $f$. Thus,  
\begin{IEEEeqnarray}{c}\label{eq:pop-risk}
	\rpop(f) \definedAs \expect_{(x,y)\sim \mathrm{Pr}_{\domain,\codomain}} 	
	\loss\left(f(\inputvect), \outputvect\right)
\end{IEEEeqnarray}

For binary classification, a commonly used loss function is the so-called \emph{zero-one} loss, defined as
\begin{IEEEeqnarray}{c}\label{eq:zero-one-loss}
	\loss\left(f(\inputvect, \outputvect)\right) \definedAs 
		\ind_{f}(x) = 
			\begin{cases}
				1 & \text{when } f(\inputvect) \neq \outputvect \\
				0 & \text{otherwise}
			\end{cases}
\end{IEEEeqnarray}

\begin{theorem}\label{thm:gen-error-is-fail-prob}
	The generalization error for an MLM performing binary classification is exactly its failure probability under the zero-one loss.
\end{theorem}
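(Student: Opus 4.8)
The plan is to substitute the zero-one loss directly into the definition of the population risk and then recognize that the resulting expectation of an indicator is precisely a probability. Starting from Eq.~\eqref{eq:pop-risk}, I would replace $\loss(f(\inputvect),\outputvect)$ by the zero-one loss of Eq.~\eqref{eq:zero-one-loss}, which is the indicator $\ind_{f}(\inputvect)$ taking the value $1$ exactly when $f(\inputvect)\neq\outputvect$ and $0$ otherwise. This immediately gives $\rpop(f)=\expect_{(x,y)\sim\jointdist}\ind_{f}(\inputvect)$, moving the entire content of the claim onto a single indicator random variable.

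The key identity to invoke next is the elementary fact that the expectation of a $\{0,1\}$-valued indicator equals the probability of the event it indicates, i.e.\ $\expect[\ind_{A}]=\prob{A}$. Applying this with $A$ the event $\{f(\inputvect)\neq\outputvect\}$ yields $\rpop(f)=\prob{f(\inputvect)\neq\outputvect}$, which is by definition the MLM failure probability. For the discrete setting I would make this explicit by expanding the expectation as a sum over the support of the joint distribution, $\rpop(f)=\sum_{(x,y)}\ind_{f}(\inputvect)\,\jointdist(\inputvect,\outputvect)$, so that the comparison against Eq.~\eqref{eq:prob-fail} becomes a direct term-by-term matching.

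The step requiring the most care---and the main obstacle---is reconciling the joint-distribution formulation of the population risk with the input-space formulation of the failure probability in Eq.~\eqref{eq:prob-fail}, which is written only in terms of $\inputdist(\inputvect)$. Here I would appeal to the determinism established in Section~\ref{sss:systematic}: because the correct response is fixed by the true function $g$, for each input $\inputvect$ the correct label $\outputvect=g(\inputvect)$ is unique, so the joint mass $\jointdist(\inputvect,\outputvect)$ concentrates on that single correct pair, and marginalizing over $\outputvect$ collapses the double sum to $\sum_{\inputvect\in\domain}\ind_{f}(\inputvect)\,\inputdist(\inputvect)$. This is exactly the right-hand side of Eq.~\eqref{eq:prob-fail}, closing the chain of equalities $\rpop(f)=\prob{f(\inputvect)\neq\outputvect}$. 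A brief closing remark would note that the same argument carries over verbatim to the continuous case with a density and integral replacing the mass function and sum (per the footnote to Eq.~\eqref{eq:prob-fail}), and---crucially for the word ``exactly'' in the statement---that no bound or approximation is incurred at any point: the two quantities coincide identically.
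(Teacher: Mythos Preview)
Your proposal is correct and follows essentially the same route as the paper: substitute the zero-one loss into the population risk, expand the expectation over the joint distribution, marginalize over $y$, and recognize the result as Eq.~\eqref{eq:prob-fail}. The only real difference is that you explicitly invoke the determinism of $g$ to justify the marginalization, whereas the paper treats $\ind_{f}(\inputvect)$ as depending on $\inputvect$ alone and applies distributivity directly; both justifications amount to the same thing given how the indicator is defined in Eq.~\eqref{eq:zero-one-loss}.
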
 

\begin{proof}
%
%
We have 
\begin{IEEEeqnarray}{rCl's}\label{eq:pop-risk-is-prob-fail}
	\rpop(f) & = & \expect_{(x,y)\sim \mathrm{Pr}_{\domain,\codomain}} 
									\ind_{f}(x) 
							 & \ldots By substituting~\eqref{eq:zero-one-loss} 
							 		into~\eqref{eq:pop-risk} \IEEEyessubnumber\\ 
					 & = & \sum_{x,y}\ind_{f}(x)\jointdist(x,y) 
					 		 & \ldots From the definition of expectation \IEEEyessubnumber\\ 
					 & = & \sum_{x} \ind_{f}(x) \sum_{y} 
								\jointdist(x, y) 
							 & \ldots Distributive property \IEEEyessubnumber \\ 
					 & = & \sum_{x} \ind_{f}(x) \inputdist(x) 
					 		 & \ldots By marginalization over y \IEEEyessubnumber \\
					 & = & \prob{f(\inputvect) \neq \outputvect}
					 	   & \ldots From~\eqref{eq:prob-fail}
\end{IEEEeqnarray}
\end{proof}


Now, if $\inputvect$ is an input image containing an NER sign, $\mathbf{i}_{j}$, then as clarified in Section~\ref{ss:abstraction},  the required MLSD response $\outputvect$ is $d_{j} = 1 $ in $\detvect$, \ie $(D_{j} = 1)$; hence,~\eqref{eq:pop-risk-is-prob-fail} is equivalent to the per image probability of non-detection of an NER sign, $\prob{D_{j} = 0}$. From Fig.~\ref{f:non-detect-vs-miss}, $\prob{D_{j} = 0}$ attains its maximum tolerable value when $\qso$ is met; therefore $\rpop(f) = \missprob'$. Thus, 

\begin{crequirement}[MLSD Generalization Performance]\label{req:gen-perf}
	The MLSD shall have a generalization error of at most $0.124$, \ie 
	$\rpop(f) \leq \missprob' \leftarrow 0.124$
\end{crequirement}

However, neither the joint nor the input distribution may be exactly known. Thus, although we can require $\rpop$ to be $\missprob'$, its \emph{true} value cannot be determined. Instead, in practice, $\rpop(f)$ is estimated using the \emph{empirical test risk} metric, $\remp(f,\testdata)$, under the requirement that the test data $\testdata$ (as well as the training data) used to learn $f$ are sampled from a \emph{representative} joint distribution. 
For the zero-one loss the empirical risk measured on dataset $\data$ is, in fact, the \emph{false classification rate} performance metric~\cite{murphy-pml}. Recalling Section~\ref{ss:scope}, the false classification rate for $f$ is the \emph{false negative rate}, $\fnr(f, \data)$. 
Thus, we can refine Reqs.~\ref{req:mlsd-safe-perf} and~\ref{req:gen-perf} as: 

\begin{crequirement}[MLSD Performance -- False Negative Rate in Test]\label{req:fnr}
	The MLSD shall have a false negative rate in test of at most $0.1$, \ie 
	$\fnr(f, \testdata) \leq \missprob \leftarrow 0.1$ 
\end{crequirement}

Additionally, the \emph{true positive rate} performance metric (also known as \emph{sensitivity} or \emph{recall}), $\tpr(f, \data)$, measured on a dataset $\data$, is the dual of the false negative rate. Thus, 

\begin{crequirement}[MLSD Performance -- Recall in Test]\label{req:tpr}
	The MLSD shall have a recall in test of at least $0.9$, \ie 
	$\tpr(f, \testdata) \geq (1 - \missprob) \leftarrow 0.9$ 
\end{crequirement}

\subsubsection{Generalization Gap}\label{sss:gen-gap}

\newcommand{\safetymargin}{\mathtt{S_{M}}}

The \emph{empirical training risk}, $\remp(f,\trainingdata)$, is an analogous metric to the empirical test risk. 
The difference between the two gives an estimate of the \emph{generalization gap}, which is, itself, the difference between the generalization error and the empirical training risk, \ie 
\begin{IEEEeqnarray}{c}\label{eq:gen-gap}
	\rpop(f) - \remp(f, \trainingdata) \approx \remp(f, \testdata) - \remp(f,\trainingdata)
\end{IEEEeqnarray}

We can now give a probabilistic upper bound $\delta$ to the generalization gap (or to its estimate) using \emph{Hoeffding's inequality} and the \emph{union bound} theorems~\cite{murphy-pml}. Thus, for data $\data$ comprising $\eta$ samples and a tolerance $\epsilon$ in the generalization gap, we have:  
\begin{IEEEeqnarray}{c}\label{eq:hoeffding}
	\prob{\abs{\rpop(f)-\remp(f,\data)} > \epsilon} \leq \delta = 
				2e^{-2\eta\epsilon^{2}}
\end{IEEEeqnarray}
which can also be rearranged as: 
\begin{IEEEeqnarray}{c}\label{eq:samplesize}
	\eta \geq \frac{1}{2\epsilon^{2}}\ln\left(\frac{2}{\delta}\right)
\end{IEEEeqnarray}

Note that some of the available literature refers to $\epsilon$ as \emph{accuracy}, and $\delta$ as \emph{confidence}. To avoid a misinterpretation of those terms as used in the contexts of aircraft certification, and system safety, versus ML, we refer to $\epsilon$ as the \emph{tolerance} and to $\delta$ as the probabilistic upper bound instead.

The minimum number of independent samples required to satisfy the bound can be determined from~\eqref{eq:samplesize}, by selecting the desired tolerance and the probabilistic upper bound.  
For example, select: 
\begin{inparaenum}[(i)]
	\item $\delta = \num{1e-3}$, proportional to the order of magnitude of the QSO, and 
	\item $\epsilon = \safetymargin(\missprob' - \missprob)$, where 
%
%
$\safetymargin$ is a margin of safety. The reasoning here is that a tolerance greater than the difference in the required generalization error and the required false negative rate, $(\missprob' - \missprob)$, results in a failed detection confirmation.
\end{inparaenum} 
Thus, selecting $\safetymargin = 0.5$, and from Fig.~\ref{f:non-detect-vs-miss}, $\missprob' \approx 0.124 \Rightarrow \epsilon = 0.012$, therefore $\eta \geq 26393$ independent samples (drawn from a representative distribution). 

Depending on whether this procedure is applied to the generalization gap or to its estimate, we can upper bound either of the two and derive the sample sizes of the training and test datasets required at the chosen tolerance. Thus, additional testing-related requirements can then be specified (not given here).

\section{Discussion}\label{s:discussion}

\subsection{Rationale for Assurance of Validity}\label{ss:rationale} 


A robust validation of the proposed method and the consequent performance requirements (Sections~\ref{s:methodology} and~\ref{s:surrogate-model}) requires a careful research design, which is out of scope for this paper, and an avenue for future work.
Instead, this section provides rationale to justify why the proposed method and the resulting requirements are a valid step to relate system-level QSOs and the performance requirements of machine learnt functionality. 

%

\subsubsection{Suitability of the System Architecture and QSOs}
\label{sss:qso-suitable}

The MLC and its organization in the AEBS (Fig.~\ref{f:aebs-bd}) represent a \emph{single channel architecture}. That is, a loss or malfunction of any element of the channel compromises the entire channel. Hence it is the weakest from the standpoint of both reliability and safety.
When decomposing and allocating the QSO to be achieved by such an architecture to its elements (including an MLC), the allocated QSOs are more conservative than they would be for alternative architectures, \eg with redundancy, or diversity. In that sense, given the intended use and the safety assessment (Section~\ref{s:example}), the chosen architecture and the QSO for the MLC are the tolerable worst-case. Therefore they are appropriate and sufficient as a starting point to formulate a conservative set of MLC performance requirements. 

\subsubsection{Suitability of the Performance Requirements}
\label{sss:reqs-suitable}

We model the probability of failure of the MLSD as the limiting relative frequency of incorrect responses to random image inputs from the input space, as given by Eqs.~\eqref{eq:prob-fail}, \eqref{eq:pop-risk} -- \eqref{eq:pop-risk-is-prob-fail}. As such, MLSD failure behavior, as modeled, is equivalent to random failure. 

Then, in the FTA for the AEBS (Fig.~\ref{f:aebs-fta}), we capture MLC malfunction as the EBC malfunction basic event, computing its failure probability as in 
Section~\ref{ss:abstraction}. 
This is analogous to the result of a quantitative FTA for a $K$-of-$M$ gate (also known as a \emph{voting gate}), whose basic events are each of the per image responses of the MLSD in the detection vector. Here, a per image non-detection, \ie an incorrect response, is equivalent to the random failure of the corresponding basic event with a constant failure probability $\missprob$. 
Thus, the binomial model for detection confirmation (Section~\ref{ss:abstraction} and Fig.~\ref{f:aebs-abstraction}) abstracts MLC malfunction also as a random failure. 

Now, as clarified in Sections~\ref{sss:deterministic} and~\ref{sss:systematic}, the MLSD is both deterministic and systematic in its behavior. Furthermore, the MLSD implements a deep convolutional neural network (Section~\ref{sss:mlc-aebs}), which has a \emph{feedforward} neural architecture, \ie there are no feedback loops between its neurons in the network layers. Thus, it is also \emph{stateless}, with the responses depending only on the current inputs, and not on the history of inputs or prior responses. 
Furthermore, detection confirmation is a deterministic rule-based decision.
Together, it implies that, under the stated assumptions and scope (Sections~\ref{ss:assumptions} and~\ref{ss:scope}), for any input, and input sequence subsequently formed: 
\begin{inparaenum}[(i)] 
	\item the ideal (best case) MLC behavior is a systematically correct response due to perfect generalization of the MLSD and a deterministic choice of NER sign detection confirmation; and  
	\item the worst case MLC behavior is a systematically incorrect response due to consistent MLSD failure followed by a deterministic choice rejecting NER sign detection confirmation. 
\end{inparaenum}

\subsubsection{Validity} 

Since random behavior lies between the worst case and ideal behavior, and since the concrete performance requirements defined based on random behavior (Section~\ref{ss:concrete-reqs}) have been mathematically derived from the  allocated QSO, we can conclude that: 
\begin{inparaenum}[(i)] 
	\item the performance requirements as specified meet the allocated QSO by construction; 
	\item any systematic MLC behavior up to the ideal, verified to meet or exceed the specified requirements will also meet the allocated QSO; and
	\item the requirements as defined are the minimum required, being the tolerable worst-case. 
\end{inparaenum}
%

\subsection{Threats to Validity}\label{ss:threats-to-validity}

First, correct application of the FTA for QSO allocation (Fig.~\ref{f:aebs-fta}) may potentially challenge the rationale for the suitability of the allocated QSO (Section~\ref{sss:qso-suitable}). Specifically, the recommended practice for FTA~\cite{arp4761} only considers hardware failure basic events, with associated failure rates, rather than basic events representing insufficient MLC performance, \eg EBC malfunction with a failure probability. 

However, quantitative FTA admits computation with failure probabilities, and Section~\ref{sss:io-prob} clarifies why a probability of failure can indeed be assigned to the MLC malfunction basic event, thus justifying its inclusion in the FTA. Additionally, note that the purpose of the FTA as in Section~\ref{ss:new-pssa} is to \emph{specify} requirements rather than to \emph{verify} that they have been met. As such, we contend that the way we have applied FTA is sound. 
Furthermore, although changes to the specific value of the system level QSO can change both the allocated QSO and the concrete performance requirements, the proposed method to develop those requirements, itself, is unaffected. 

Next, the constraints for applying a binomial model may potentially challenge its use and the associated rationale for the suitability of the resulting performance requirements (Section~\ref{sss:reqs-suitable}). We enumerate and substantiate each constraint: 
\begin{compactenum}[(i)]
	\item \emph{Fixed number of Bernoulli trials}: Met due to a fixed size detection vector ($n=12$), and by definition (Section~\ref{ss:abstraction}), with Boolean responses for both the MLSD and the MLC.
	\item \emph{IID trials}: The input to the MLC, and subsequently to the MLSD, is a temporally ordered sequence of images of the runway scene as captured and transmitted by the video camera. Hence, they are a correlated time series from the same generating process, due to which they are \emph{identically distributed} but \emph{not independent}. The MLSD is systematic, deterministic, and stateless; hence its responses are also identically distributed but not independent. Since the MLSD responses are the inputs to the binomial model, the IID constraint is \emph{not met}. 
	
\quad However, this constraint effectively implies that the trials should be random. Thus, our counter argument here is that maintaining the assumption of IID trials does not invalidate the requirements because: 
\begin{inparaenum}[(a)]
	\item despite abstracting the MLC failure behavior as random, the concrete performance requirements meet the QSO by construction; and 
	\item as before, we use the abstraction to define the requirements rather than to verify that they have been met, which is when the IID constraint would apply.
\end{inparaenum}
That is, the MLC must be verified with non-IID data against Requirements~\ref{req:fail-to-detect}--\ref{req:tpr}, even though those requirements have been defined assuming IID inputs for post-processing. 

	\item \emph{Constant probability of trial outcomes}: Eqs.~\eqref{eq:prob-fail}, \eqref{eq:pop-risk} -- \eqref{eq:pop-risk-is-prob-fail} clarify the relationship of the MLSD failure probability to the distribution of its inputs, showing that the former is deterministically related to the latter by the identity function. Since the inputs to the MLSD have been established to be identically distributed, their moments (\eg their expected values) are also identical and therefore constant (but unknown). Hence the MLSD failure probability is constant. 

\quad A concern here is that $\missprob$ may change over the long run, due to a drift in the input space distribution. Mitigating the effects of such distribution drift requires carefully describing the ODD (see Section~\ref{sss:odd}), and consideration of the exposure duration over which the QSO and failure probabilities are expressed, \ie the duration for which the input distribution is expected to be stable, and where $\missprob$ will then be constant.
\end{compactenum}

\subsection{Additional Considerations}\label{ss:additional-considerations}


\subsubsection{Relevance of the Operational Design Domain}
\label{sss:odd}

Defining $\jointdist(x,y)$, the joint distribution of the input and output space, underpins both the ML process and the development of MLC and MLM performance requirements. That induces specific additional considerations on sufficiently characterizing: 
\begin{inparaenum}[(i)]
	\item the \emph{marginal} input space distribution, $\inputdist({\domain})$, reflecting the intended operating environment; and 
	\item the \emph{conditional} input space distribution, $\conditionaldist({\domain|\codomain})$, 
	which reflects functional intent. 
\end{inparaenum} 
Both considerations require, in part, a well-defined and validated ODD from which data must be sampled to meet various data properties~\cite{easa-cp-L12, ml-nasa-tr-2024, kape-safecomp-2023}. The latter consideration in particular levies requirements on the pre-processing element, or more generally on the system architecture, to assure that the MLC only receives inputs consistent with its defined input space (known as \emph{in-ODD} or \emph{in-domain}) and functional intent (\ie \emph{in-distribution}), as considered during the ML process. 
	
%

\newcommand{\rpopgen}{R_{p}^{(g)}}
\newcommand{\rpoprob}{R_{p}^{(r)}}

\subsubsection{Robustness Performance}\label{sss:robustness}

The ML literature treats model robustness separately from generalization performance. However, from a safety standpoint, we contend that MLM failures in general, especially those that result from model fragility under input perturbation or abnormality, stem from an inadequate definition of both the ODD and the corresponding input space distribution. This viewpoint is consistent with how, for example, a lack of robustness in conventional airborne software is treated as a requirements inadequacy~\cite{do178}. As such, not only normal range inputs but also aberrant and limiting inputs should be considered in the ODD and the corresponding input space distribution when specifying and evaluating MLM failure probability and the associated performance requirements. 

Thus, in this paper, although MLSD robustness performance has not been considered, we indicate a potential way forward for future work: as clarified in Section~\ref{ss:generalization}, the MLSD generalization error $\rpop(f)$ equals its maximum tolerable failure probability $\missprob'$. We propose to treat it as a metric of \emph{robust generalization} that considers failures due to both a lack of model robustness and inadequate generalization for previously unseen inputs. Hence, we can have: 
\begin{IEEEeqnarray*}{c}
	\rpop(f) = \rpoprob(f) + \rpopgen(f) 
\end{IEEEeqnarray*}
where $\rpoprob(f)$ is the portion of the generalization error apportioned to robustness related failures, and $\rpopgen(f)$ is the remainder. We believe this has the advantage of being able to reuse empirical risk and other related metrics, as in Section~\ref{ss:generalization}, but for robustness. 
Future work will thus explore expressing $\rpoprob(f)$ appropriately (\eg in terms of the relative frequency of abnormal inputs that lead to failures), as well as the relationship to the prevailing robustness related metrics. 

\subsubsection{Implications for Verification}\label{sss:verification}

As previously indicated (Section~\ref{ss:threats-to-validity}), a consequence of using the method described in Section~\ref{s:surrogate-model} is that the underlying abstraction cannot also be used to verify that the implementation meets the defined requirements. In particular, the assumptions of the binomial model cannot be used to define verification requirements because the IID constraint will not be met. 

A second implication relates to the dataset sample size $\eta$ necessary to meet the bound on the generalization gap and the related tolerance (Section~\ref{ss:generalization}). Specifically, Eq.~\eqref{eq:hoeffding} applies to any RV that can be bounded and does not depend on the underlying data distribution. Hence, $\eta$ is a pessimistic worst-case lower bound, which increases quadratically with a smaller tolerance $\epsilon$ in the generalization gap.  
Thus, alternative methods (such as using the Normal distribution approximation to the binomial) could be used to give more favorable sample sizes, subject to the validity of the assumptions of those alternative methods.

\section{Concluding Remarks}\label{s:conclusions}

\subsection{Related Work}\label{s:related-work} 

Our adaptation of the AEBS (Fig.~\ref{f:aebs-bd}) is from~\cite{aebs-dasc2024}, which is the closest counterpart to this paper. 
In Section~\ref{s:example} and~\ref{ss:new-pssa}, we clarified the modifications our paper makes to the AEBS and its safety assessment, correcting what we believe is an erroneous application of the FTA in~\cite{aebs-dasc2024}. Additionally, in \cite{aebs-dasc2024}, the Poisson distribution is used, with limited justification, to compute the per image miss probability as $\missprob \leq 0.19$, with the confirmation threshold selected as $\minhit = 5$. However, the justification for those choices is weak, although it is acknowledged that a binomial distribution is more precise. 
In contrast, we give a detailed description of the binomial model for the intended MLC and MLM behaviors, with a range of admissible values for the corresponding parameters (Section~\ref{s:surrogate-model}), also supplying substantiating rationale (Section~\ref{s:discussion}). Additionally, our paper relates the QSO to both MLM generalization and sample size estimates for the test data, thus going further than~\cite{aebs-dasc2024}.

The automotive systems domain has progressively considered the relation between system safety and MLM performance. 
For instance, in the context of detecting pedestrians, reasoning about how evidence of MLC performance contributes to safety assurance relies on showing that a required level of safety-related MLC performance has been attained in a defined operating environment~\cite{burton-safecomp-2019}. That, in turn, has been formalized within a safety assurance case framework as an assume-guarantee contract, which invokes quantitative performance requirements that an MLC must meet, under assumptions fulfilled by its inputs. 
In~\cite{burton2024}, the evaluation that such safety contracts are satisfied is further explored using \emph{subjective logic}, to account for the uncertainty in the assessment. That work also proposes that uncertainty-aware evaluation of ML performance requirements may provide the mechanisms to define suitable target values for the related metrics, such that they satisfy the system safety objectives, but stops short of clarifying what those mechanisms are. 

In the same application context of pedestrian detection,~\cite{gauerhof-safecomp-2020} emphasizes the elicitation and analysis of ML safety requirements, and their impact on assurance activities during ML development. 
That work, in turn, leverages a structured process to determine so-called \emph{validation targets}~\cite{gauerhof-validation-2018} necessary for assurance of safety of the intended functionality. Validation targets represent the evidence necessary to confirm, amongst other things, safety-related performance of machine learnt models, and the underlying insufficiencies. 
Both~\cite{gauerhof-safecomp-2020} and~\cite{gauerhof-validation-2018} give examples of machine learnt model performance and robustness requirements that impact system safety, along with rationale that clarifies the choice of specific metrics and their values. However, this clarification is limited as regards the procedures used to determine those metrics, their values, and how they follow from safety objectives. 

Assurance cases for safety functions that use supervised ML have recognized that it is a key goal to associate the safety-related properties, metrics and performance of MLCs to the higher level system safety requirements~\cite{burton-frontiers-2023}. That goal is supported by lower-level claims of completeness, consistency, and sufficiency of the referenced properties, metrics, and the associated performance limitations. The evidence for each of those, in turn, includes the results of safety analyses, systematic reviewing, safety verification, and---crucially---the definition of (valid) safety-related metrics and performance limitations. Again, indications on how latter is accomplished are notably absent. 

Other contemporary research has investigated the derivation of reliability requirements for MLCs. For example, safety-related visual transformations and changes for which human vision performance is unaffected have been used to determine and verify reliability requirements for MLCs used for machine vision~\cite{hu2022}. 
Drawing on the concepts of software \emph{operational profiles} and \emph{probability of failure on demand} (PFD),~\cite{Zhao2023} defines reliability and robustness metrics for DNN classifiers. 
In~\cite{Scheerer2024}, conformal prediction is leveraged to give a procedure to derive lower bounds on DNN reliability.
Although, in both~\cite{Zhao2023} and~\cite{Scheerer2024}, reliability modeling is a \emph{bottom-up}, component-level process, relying upon data, and iterative assessment of trained models. The work in this paper is rather a \emph{top-down} method.

A survey of contemporary approaches to specifying safety requirements for MLCs identifies various related considerations to be addressed~\cite{hawkins2025}. However, it does not specifically address how MLC safety requirements, or safety-related MLC performance requirements are to be related to or derived from system safety objectives.

Other aviation guidelines~\cite{arp4761, do178, do254, arp4754} do not give examples translating QSOs into item level performance requirements, and are inapplicable for items including ML. 
There exist minimum operational performance standards (MOPS), and minimum aviation system performance standards (MASPS), that give function and application specific safety-related performance requirements, though they do not consider ML. Thus, our paper aims to mirror such efforts for machine learnt functionality, extending the state of the practice. 

Elements of the work in this paper have previously informed the ongoing effort of industry consensus-based standards committees~\cite{dp-faa-2024} (\eg EUROCAE WG-114 and SAE G-34), whose members include civil aviation regulators. However, their work is still in progress and as yet unpublished, hence we are unable to provide more details and clarification contrasting it with this paper. 

\subsection{Summary and Future Work}

The main contribution of this paper is an initial method, with rationale for validity, to mathematically translate QSOs allocated from a system safety assessment into the safety-related performance requirements and the associated metrics for an MLC and its underlying MLM. Using an example of an aircraft emergency braking system that uses a deep  neural network for runway sign detection, we have illustrated the method at the system, component, and model levels, showing the relationship to machine learnt model generalization, and sample size for test data. To the best of our knowledge, this paper is the first to systematically relate system safety and safety-related machine learnt model performance requirements.

There are several avenues to further improve upon this initial work: first, we intend to extend our method to address robustness performance (Section~\ref{ss:additional-considerations}), for example, by relaxing the assumptions on pre-processing (Section~\ref{ss:assumptions}). 
Next, we also aim to address per image false positives and false classifications, without which the current requirements are more optimistic than they should be. A candidate approach here, is to use a multinomial model, whilst also exploring Bayesian approaches, \eg with beta and Dirichlet priors, to capture and specify the uncertainty in the performance metrics. 
Additionally, the following are key to a broader applicability of our method: 
\begin{inparaenum}[(i)]
	\item addressing applications involving regression and multi-class classification problems, also considering other types of loss functions, generalization bounds, and metrics; and 
	\item addressing alternative procedures for detection confirmation, \eg using longer and/or multiple detection sequences.
\end{inparaenum}

\subsection*{Acknowledgment}

We thank various members of EUROCAE WG-114 and SAE G-34 for numerous fruitful discussions on the topic of this paper.

%
%

\bibliographystyle{IEEEtran}


\end{document}